\newtheorem{theorem}{Theorem}
\newtheorem{corollary}{Corollary}
\newtheorem{proposition}{Proposition}
\DeclareMathOperator{\tr}{Tr}
\title{Quantum Entanglement \& Purity Testing: A Graph Zeta Function Perspective}
\begin{document}
\author[1]{Zachary P. Bradshaw\corref{cor1}}
\ead{zbradshaw@tulane.edu}
\author[2]{Margarite L. LaBorde}
\cortext[cor1]{Corresponding author}
\affiliation[1]{organization={Tulane University, Department of Mathematics},
            city={New Orleans},
            country={USA}}
\affiliation[2]{organization={Louisiana State University, Department of Physics \& Astronomy},
            city={Baton Rouge},
            country={USA}}

\begin{abstract}
    We assign an arbitrary density matrix to a weighted graph and associate to it a graph zeta function that is both a generalization of the Ihara zeta function and a special case of the edge zeta function. We show that a recently developed bipartite pure state separability algorithm based on the symmetric group is equivalent to the condition that the coefficients in the exponential expansion of this zeta function are unity. Moreover, there is a one-to-one correspondence between the nonzero eigenvalues of a density matrix and the singularities of its zeta function. Several examples are given to illustrate these findings.
\end{abstract}
\maketitle

\begin{multicols}{2}
\section{Introduction}
One of the most interesting and often discussed properties arising in quantum information theory is that of quantum entanglement \cite{einstein1935}, wherein a bipartite quantum system is described by a joint state in such a way that the state of one subsystem cannot be described independently of the other, no matter the physical distance between them. In the restricted case of pure states that we consider here, a state $\ket{\psi}$ is called entangled if it cannot be written as a product state $\ket{\phi}\otimes\ket{\chi}$. States which do not possess this property are called separable. In recent decades, a number of criteria for separability have been developed, including the positive partial transpose (PPT) criterion \cite{HHH96,Per96} and $k$-extendibility \cite{DPS02,W89a}, and the problem of determining whether a state is separable or entangled has been shown to be NP-hard in many cases \cite{gharibian2008}. Additionally, quantum algorithms which test for separability, such as the SWAP test \cite{barenco1997stabilization} and its generalizations \cite{bradshaw2022}, are under continuous development.

Meanwhile, another perspective has arisen in which quantum properties are framed in a graph-theorectic setting. This study began with the work of Braunstein, Ghosh, and Severini \cite{braunstein2004}, who defined the density matrix of a graph as the normalized Laplacian associated to it. In their work, they give a graph-theorectic criterion for the entanglement of the associated density matrix; however, not all density matrices can be encoded into a graph in this way, thus limiting the field of applicability of this criterion. The work of Hassan and Joag \cite{hassan2007} addresses this problem by associating an arbitrary density matrix to a weighted graph. They then define a modified tensor product of graphs in such a way that an arbitrary quantum state is a product state if and only if it is the density matrix of a modified tensor product of weighted graphs.

The perspective of Braunstein et al. produces a graph-theorectic separability criterion known as the degree criterion \cite{braunstein2006}, which was shown to be equivalent to the PPT criterion in \cite{hildebrand2008}. In the limited case of density matrices which can be associated to a graph in this sense, the PPT criterion can be replaced by a simple graph-theorectical criterion. Here we will take another step in this direction by establishing a connection between a family of bipartite pure state separability tests, the cycle index polynomials of the symmetric group, and a new graph zeta function associated to a density matrix. The crux of the argument is that this zeta function is the generating function for the cycle index polynomial of the symmetric group evaluated at the moments of the density matrix, and this is exactly the acceptance probability of these tests. Moreover, we show that there is a one-to-one correspondence between the singularities of this zeta function and the nonzero eigenvalues of the associated density matrix, thereby establishing a variant of the Hilbert-P\'olya conjecture associated to this zeta function.

In Section~\ref{sec:sep-tests}, we review the separability tests defined in \cite{bradshaw2022,margo} related to previous work in \cite{doherty04separability,doherty05multipartite,W89a}. We then review previous constructions of the density matrix of a graph but ultimately take a different approach by naturally assigning an arbitrary density matrix to a weighted graph in Section~\ref{sec:graph-density}. The relevant graph zeta function is then introduced in Section~\ref{sec:graph-zeta}, and in Section~\ref{sec:equivalence}, we prove that our separability tests are equivalent to the condition that the expansion of the corresponding zeta function has unit coefficients. This establishes a simple graph-theorectic test for the entanglement of pure bipartite states. We also derive a determinant representation for this function, and use it to prove the correspondence between its singularities and the nonzero eigenvalues of the associated density matrix in Section~\ref{sec:sing-eigen}. Finally, we give concluding remarks in Section~\ref{sec:conclusion}.

\section{Review of Separability Tests}\label{sec:sep-tests} The separability tests outlined in \cite{bradshaw2022,margo} are examples of $G$-Bose symmetry tests, where $G$ is some finite group. Let $U:G\to \mathcal{U}(\mathcal{H})$ be a unitary representation of $G$ on the Hilbert space $\mathcal{H}$. A state $\rho$ is called $G$-Bose symmetric if
\begin{align*}
    \Pi_G\rho\Pi_G^\dagger=\rho,
\end{align*}
where $\Pi_G:=\frac{1}{\lvert G\rvert}\sum_{g\in G}U(g)$ is the projection onto the $G$-symmetric subspace, or the space of states $\ket{\psi}$ such that $U(g)\ket{\psi}=\ket{\psi}$ for all $g\in G$. To be clear, in general both mixed and pure states may demonstrate this property. In both \cite[Chapter~8]{harrow2005applications} and \cite{margo}, it was shown how to test a state for $G$-Bose symmetry using a quantum computer, and we review a special case of this procedure here.

\begin{figure}[H]
\begin{center}
\includegraphics[
width=3.25in
]{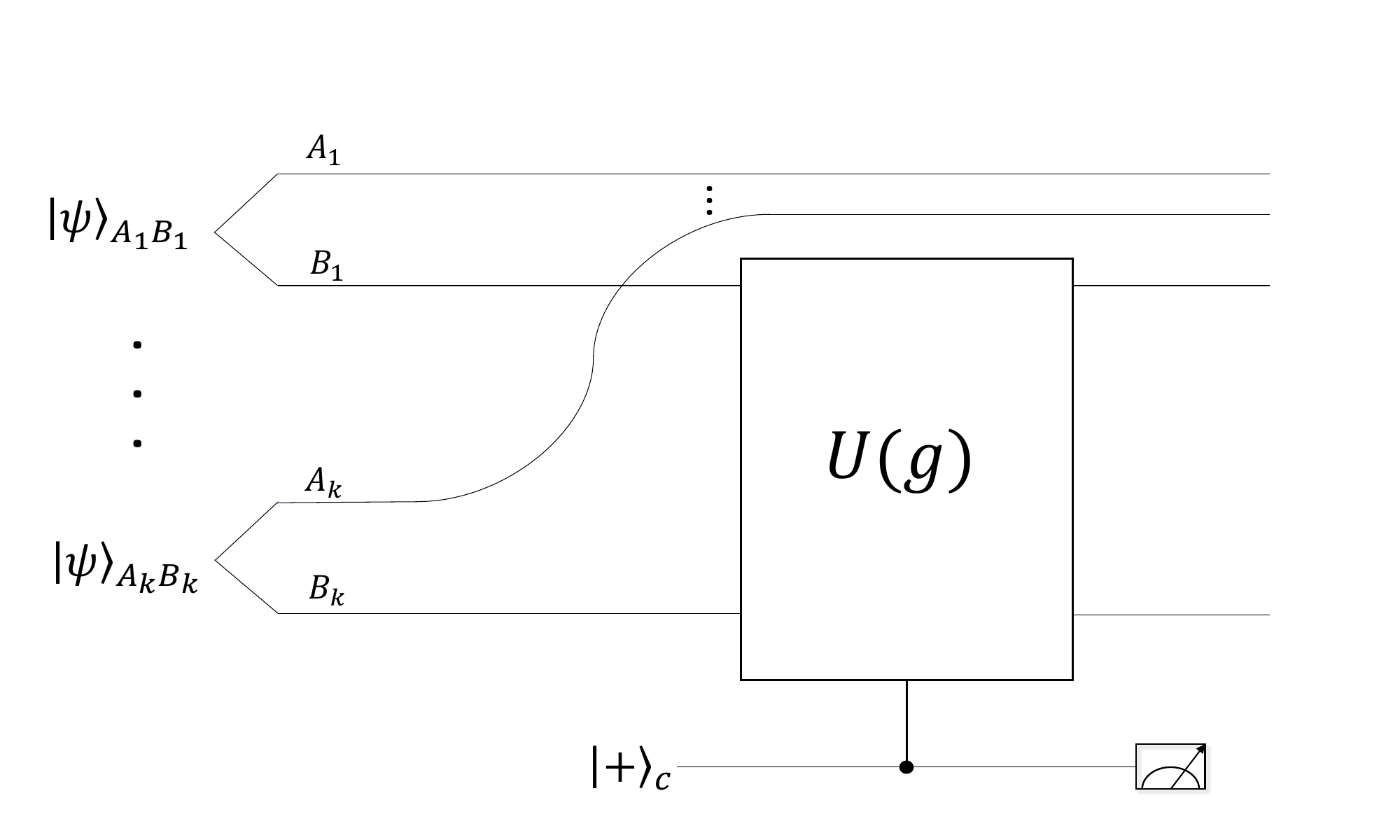}
\end{center}
\caption{Quantum circuit to implement a $G$-Bose symmetry test.}
\label{fig:circuit}
\end{figure}

The $G$-Bose symmetry property is equivalent to the condition $\left\Vert\Pi^G_S\ket{\psi_S}\right\Vert_2=1$. To test for this, we generate a superposition control state
\begin{equation*}
    \ket{+}_C = \frac{1}{\sqrt{|G|}} \sum_{g\in G} \ket{g}  \,
\end{equation*}
wherein we take a superposition over some set of computational basis elements labelled with a corresponding group element $G$, and this can be done in a general sense with a quantum Fourier transform, as discussed in \cite{margo}. This control state is used to implement a corresponding unitary $U(g)$ if the control qubit is in the state $\ket{g}$. Afterwards, applying an inverse quantum Fourier transform and measuring all control qubits, accept if the outcome $\ket{0}\!\!\bra{0}_C$ occurs and reject otherwise. The acceptance probability, given a pure state, is then given by
\begin{align*}
    \bigg\lVert(\bra{+}_C\otimes I_S)&\bigg(\frac{1}{\sqrt{|G|}}\sum_{g\in G}\ket{g}_C\otimes(U_S(g)\ket{\psi}_S)\bigg)\bigg\rVert_2^2\\
    &=\bigg\lVert\frac{1}{|G|}\sum_{g\in G}U_S(g)\ket{\psi}_S\bigg\rVert_2^2\\
    &=\left \|\Pi_S^G\ket{\psi}_S\right \|_2^2\\
    &=\tr[\Pi_S^G\ket{\psi}\!\!\bra{\psi}_S].
\end{align*}
This result is easily generalized to mixed states by convexity, and we find that the acceptance probability is equal to $\tr[\Pi_S^G\rho_S]$. 

Consider the circuit in Fig~\ref{fig:circuit}. To test for the separability of a bipartite pure state $\psi_{AB}$, we prepare $k$ copies of the state $\psi_{AB}$ and label the $k$ copies of the $A$ system by $A_1 \cdots A_k$ and the $k$ copies of the $B$ system by $B_1 \cdots B_k$. We then perform an $S_k$-Bose symmetry test on the state $\psi_{AB}^{\otimes k}$, wherein we identify $S$ with $A_1B_1\cdots A_kB_k$ and $U_S(\pi)$ with $I_{A_1\cdots A_k}\otimes W_{B_1\cdots B_k}(\pi)$, where $\pi\in S_k$ and $W_{B_1\cdots B_k}:S_k\to \mathcal{U}(\mathcal{H}_{B_1\cdots B_k})$ is the standard unitary representation of $S_k$ which acts on $\mathcal{H}_{B_1\cdots B_k} \equiv \mathcal{H}_{B_1} \otimes \cdots \otimes \mathcal{H}_{B_k}$ by permuting the Hilbert spaces according to the corresponding permutation. Define $\rho_B:= \tr_A[\psi_{AB}]$. The acceptance probability for the bipartite pure-state separability algorithm is given by
\begin{align*}
p^{(k)} := \tr[\Pi_{B_1\cdots B_k}\rho_B^{\otimes k}]
\end{align*}
where $\Pi_{B_1\cdots B_k}:= \frac{1}{k!}\sum_{\pi\in S_k}W_{B_1\cdots B_k}(\pi)$ is the projection onto the symmetric subspace. The acceptance probability $p^{(k)} = 1$ for all $k$ if and only if $\psi_{AB}$ is separable. Thus, the $S_k$-Bose symmetry test is indeed a separability test for pure bipartite states.

In \cite{bradshaw2022}, this test is generalized to any finite group $G$. Moreover, the acceptance probability of this generalized test is shown to be given by the cycle index polynomial of $G$, which is defined by $\frac{1}{\lvert G\rvert}\sum_{g\in G}x_1^{c_1(g)}\cdots x_k^{c_k(g)}$ for permutation groups, where $c_j(g)$ is the number of $j$-cycles in the cycle decomposition of $g$. This definition is easily extended to any finite group by Cayley's theorem \cite{Dummit_Foote}. When we specialize to the case $G=S_k$, this acceptance probability takes the form
\begin{align*}
p^{(k)}=\sum_{a_1+2a_2+\cdots +ka_k=k}\prod_{j=1}^k\frac{(\tr[\rho_B^j])^{a_j}}{j^{a_j}a_j!},
\end{align*}
where the sum is taken over the partitions of $k$.

\section{Density Matrices and Weighted Graphs}\label{sec:graph-density}
The study of the density matrix of a graph was initiated by Braunstein et al. in \cite{braunstein2004}. A graph $X$ is a set $V(X)=\{v_1,\ldots,v_n\}$ of labeled vertices along with a set $E(X)$ of pairs $\{v_i,v_j\}$ of vertices called edges.

The (vertex) adjacency matrix $A_X$ of the graph is given by
\begin{align*}
    (A_X)_{ij}:=
    \begin{cases} 1, & \text{if }\{v_i,v_j\}\in E(X)\\
    0, & \text{otherwise}
    \end{cases},
\end{align*}
and it encodes the information about the edges of the graph. The degree $d(v_i)$ of a vertex $v_i$ is the number of edges which include the vertex $v_i$. We define the degree matrix $\Delta_X$ of the graph to be the diagonal matrix consisting of the degrees of each vertex; that is, $\Delta_X:=\text{diag}(d(v_1),\ldots,d(v_n))$.

The combinatorial Laplacian of the graph is defined to be the difference between the degree matrix and the adjacency matrix: 
\begin{align*}
    L_X:=\Delta_X-A_X,
\end{align*}
which is both positive semi-definite and symmetric; however, it does not have unit trace. For this reason, we define the density matrix of a graph to be
\begin{align*}
    \rho_X:=\frac{\Delta_X-A_X}{\tr(\Delta_X)}.
\end{align*}

We may prescribe an arbitrary orientation to the edges so that we may label them by $e_1,\ldots,e_{\lvert E(X)\rvert}$ and their respective inverses by $e_{\lvert E(X)\rvert+1},\ldots,e_{2\lvert E(X)\rvert}$. A path in the graph $X$ is a sequence of edges $a_1\cdots a_s$ such that the origin vertex of $a_{j+1}$ is the terminal vertex of $a_j$. The primes of a graph are defined to be the equivalence classes of closed, backtrackless, tailless, primitive paths. By this we mean equivalence classes of paths such that the origin vertex of $a_1$ is the terminal vertex of $a_s$, $a_s\ne a_1^{-1}$, $a_{j+1}\ne a_j^{-1}$, and such that the path is not the power of another path. The equivalence classes are given by the cyclic shifts of $a_1\cdots a_s$. That is, we define $[a_1\cdots a_s]=[a_sa_1\cdots a_{s-1}]=\cdots=[a_2\cdots a_sa_1]$.

Clearly, there are density matrices which do not fit the graph-theoretic description outlined above. This point has been addressed by Hassan and Joag \cite{hassan2007} by generalizing the association of a density matrix to a graph to include a subset of weighted graphs. A weighted graph is a graph equipped with a weight function $\omega:E(X)\to\mathbb{C}$. In what follows, we use the notation $\omega_{ij}:=\omega(\{v_i,v_j\})$ for the weight of the edge connecting $v_i$ to $v_j$. Define the adjacency matrix of a weighted graph $(X,\omega)$ by
\begin{align*}
    (A_{X,\omega})_{ij}:=\begin{cases}
        \lvert\omega_{ij}\rvert, & \text{if }\{v_i,v_j\}\in E(X)\\
        0, & \text{otherwise}
    \end{cases},
\end{align*}
and the degree of the vertex $v_i$ by $d(v_i):=\sum_{j}\lvert\omega_{ij}\rvert$. The degree matrix is defined the same way as before. An approach similar to that in \cite{hassan2007} is to define the density matrix of a weighted graph by
\begin{align}\label{eq:density-of-graph}
    \rho_{X,\omega}:=\frac{L_{X,\omega}}{\tr(L_{X,\omega})},
\end{align}
where $L_{X,\omega}:=\Delta_{X,\omega}-A_{X,\omega}$ is the weighted Laplacian of the graph. For this to make sense, we need the weighted Laplacian to be nonzero, and graphically this is equivalent to the condition that the graph does not consist only of loops. In order to insure that the density matrix is positive semi-definite, we could restrict our attention to weight functions with magnitudes unchanged by a swap of the indices. That is, weight functions satisfying $\lvert\omega_{ij}\rvert=\lvert\omega_{ji}\rvert$. Note that this condition is satisfied by both symmetric and conjugate symmetric weight functions.

\begin{proposition}\label{prop:positive-semidef} Let $\omega$ be a weight function satisfying $\lvert\omega_{ij}\rvert=\lvert\omega_{ji}\rvert$. Then the Laplacian of the weighted graph $(X,\omega)$ is positive semi-definite.
\end{proposition}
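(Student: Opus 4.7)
The plan is to prove positive semi-definiteness by directly computing the quadratic form $x^{*}L_{X,\omega}x$ for an arbitrary vector $x \in \mathbb{C}^{n}$ and recognizing it as a manifestly nonnegative sum of squares. This is the standard trick for ordinary graph Laplacians, and the hypothesis $|\omega_{ij}|=|\omega_{ji}|$ is tailored precisely to make the trick go through.

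First I would note that under the hypothesis, the matrix $A_{X,\omega}$ is a real symmetric matrix with nonnegative entries, and $\Delta_{X,\omega}$ is a nonnegative real diagonal matrix, so $L_{X,\omega}$ is Hermitian and it suffices to check nonnegativity of $x^{*}L_{X,\omega}x$. Expanding directly,
\begin{align*}
x^{*}L_{X,\omega}x
&= \sum_{i} d(v_i)|x_i|^{2} - \sum_{i,j} |\omega_{ij}|\,\overline{x_i}x_j \\
&= \sum_{i}\sum_{j}|\omega_{ij}||x_i|^{2} - \sum_{i,j}|\omega_{ij}|\,\overline{x_i}x_j.
\end{align*}

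Next I would symmetrize. Using $|\omega_{ij}|=|\omega_{ji}|$, the first double sum equals $\tfrac{1}{2}\sum_{i,j}|\omega_{ij}|(|x_i|^{2}+|x_j|^{2})$, and the second equals $\tfrac{1}{2}\sum_{i,j}|\omega_{ij}|(\overline{x_i}x_j + x_i\overline{x_j})$. Combining,
\begin{align*}
x^{*}L_{X,\omega}x = \tfrac{1}{2}\sum_{i,j}|\omega_{ij}|\,|x_i - x_j|^{2} \geq 0,
\end{align*}
which establishes the claim.

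There is no real obstacle here; the only subtlety to flag is the role of the hypothesis. Without $|\omega_{ij}|=|\omega_{ji}|$, the matrix $A_{X,\omega}$ need not be symmetric, the symmetrization step in the middle would be invalid, and even the reality of the quadratic form would fail. It is worth remarking that one does not need $\omega$ itself to be symmetric or Hermitian; only the moduli of the weights need to be symmetric under swapping endpoints, which is automatically the case for both symmetric ($\omega_{ij}=\omega_{ji}$) and conjugate-symmetric ($\omega_{ij}=\overline{\omega_{ji}}$) weight functions mentioned in the preceding discussion.
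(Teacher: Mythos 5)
Your proof is correct and follows essentially the same route as the paper's: expand the quadratic form $x^{*}L_{X,\omega}x$, use $\lvert\omega_{ij}\rvert=\lvert\omega_{ji}\rvert$ to symmetrize, and collapse it to a nonnegative sum of $\lvert\omega_{ij}\rvert\,\lvert x_i-x_j\rvert^{2}$ terms (the paper sums over $i<j$ where you sum over all pairs with a factor of $\tfrac12$, which is an immaterial difference). No gaps.
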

\begin{proof}
    Let $u\in\mathbb{C}^n$ be an arbitrary vector upon which $L_{X,\omega}$ acts and denote its components by $u_i$. Then
    \begin{align}
        u^\dagger L_{X,\omega}u&=u^\dagger\Delta_{X,\omega}u-u^\dagger A_{X,\omega}u\nonumber\\
        &=\sum_id(v_i)\lvert u_i\rvert^2-\sum_{i,j}\lvert\omega_{ij}\rvert u_i^*u_j\nonumber\\
        &=\sum_{i,j}\lvert\omega_{ij}\rvert\lvert u_i\rvert^2-\sum_{i,j}\lvert\omega_{ij}\rvert u_i^*u_j\label{eq:expanding-terms}
    \end{align}
    Now the first term in \eqref{eq:expanding-terms} can be written as
    \begin{align*}
        \sum_{i}\lvert\omega_{ii}\rvert\lvert u_i\rvert^2+\sum_{i<j}\lvert\omega_{ij}\rvert\lvert u_i\rvert^2+\sum_{i<j}\lvert\omega_{ij}\rvert\lvert u_j\rvert^2.
    \end{align*}
    Similarly, the second term in \eqref{eq:expanding-terms} is given by
    \begin{align*}
        \sum_{i}\lvert\omega_{ii}\rvert\lvert u_i\rvert^2+\sum_{i<j}\lvert\omega_{ij}\rvert u_i^*u_j+\sum_{i<j}\lvert\omega_{ij}\rvert u_iu_j^*.
    \end{align*}
    Thus, we have
    \begin{align*}
        u^\dagger L_{X,\omega}u&=\sum_{i<j}\lvert\omega_{ij}\rvert(\lvert u_i\rvert^2+\lvert u_j\rvert^2-u_i^*u_j-u_iu_j^*)\\
        &=\sum_{i<j}\lvert\omega_{ij}\rvert\lvert u_i-u_j\rvert^2,
    \end{align*}
    from which it follows that the Laplacian is positive semi-definite.
\end{proof}

Since \eqref{eq:density-of-graph} obviously has unit trace, it does indeed define a density matrix. Notice, however, that the entries of this density matrix are positive real numbers. By following the proof of Proposition~\ref{prop:positive-semidef}, the reader will find that if $\lvert\omega_{ij}\rvert$ is replaced by $\omega_{ij}$ in the weighted adjacency matrix, then no conclusion can be made about the positive semi-definiteness of $\rho_{X,\omega}$. It is therefore not obvious how to extend this construction to take into account density matrices with complex coefficients. For this reason, we will take a different approach by naturally assigning an arbitrary density matrix to a weighted graph. This shortcoming is noted by Hassan and Joag in their version of the construction; although, they show that the property of having a density matrix is invariant under isomorphism of graphs. Moreover, several conditions for which a weighted graph does or does not have a density matrix are given in their work \cite{hassan2007}.

For our purposes, a simpler approach can be taken which includes all density matrices at the expense of having to exclude further graphs. Indeed, it is not hard to construct a weighted graph $(X,\omega)$ from a density matrix $\rho$. We construct it as follows:
\begin{enumerate}
    \item The number of vertices is the dimension $n$ of the Hilbert space upon which $\rho$ acts. Label them $1,\ldots,n$.
    \item If $\rho_{ij}$ is nonzero, include an edge from vertex $i$ to vertex $j$ with weight $\omega_{ij}:=\rho_{ij}$.
\end{enumerate}
Notice that in the second point, an edge is included for both $\rho_{ij}$ and $\rho_{ji}$, so that primes involving an edge from $i$ to $j$ followed by an edge from $j$ to $i$ can be considered (see Fig~\ref{fig:++graph}). It is this graph which we will use to construct our zeta function in the next section. In fact, the above procedure works for any matrix, not just density matrices. However, we will restrict our attention to those matrices which are positive semi-definite with unit trace so that relationships with quantum information can be developed. From this perspective, the only weighted graphs that we associate to a density matrix are those for which the matrix with entries $\omega_{ij}$ given by the weight function is a density matrix. While it is immediately apparent that the weights of the loops satisfy $\omega_{ii}\ge0$ with $\sum_i\omega_{ii}=1$, and the remaining weights satisfy $\omega_{ij}=\omega_{ji}^*$, a complete classification of this subset of weighted graphs remains an open question.

\section{Graph Zeta Functions}\label{sec:graph-zeta} The most famous zeta function is that of Bernhard Riemann \cite{edwards2001,patterson1988}, which is defined as the function
\begin{align*}
    \zeta(s):=\sum_{n=1}^\infty\frac{1}{n^s}
\end{align*}
for Re$(s)>1$ and its analytic continuation elsewhere. Associated to this function is the Riemann hypothesis, which states that all nontrivial zeros of the zeta function are contained on the vertical line with Re$(s)=\frac12$. Proving this hypothesis is one of the most important problems in mathematics as many results rely on the assumption of its truth. One potential approach to solving this problem is the Hilbert-P\'olya conjecture, which states that the imaginary parts of the nontrivial zeros of $\zeta$ are the eigenvalues of a self-adjoint operator.

This function was connected to the theory of prime numbers by Euler upon proving the identity
\begin{align*}
    \zeta(s)=\prod_{p\text{ prime}}(1-p^{-s})^{-1}.
\end{align*}
Subsequently, several analogous functions have been defined, including the Ihara zeta function \cite{czarneski,kotani2000,terras2010book} associated to a graph $X$, which is defined by
\begin{align*}
    \zeta_X(u):=\prod_{[P]}(1-u^{\nu(P)})^{-1},
\end{align*}
where the product is over the primes in $X$, and $\nu(p)$ denotes the length (number of edges) of $P$. There is a determinant formula for this function which can be traced back to Bass \cite{bass1992} and Hashimoto \cite{hashimoto1989} given by
\begin{align*}
    \zeta_X(u)^{-1}=(1-u^2)^{r-1}\det(I-Au+(\Delta-I)u^2),
\end{align*}
where $r=\lvert E(X)\rvert-\lvert V(X)\rvert+1$ is the rank of the fundamental group of the graph.

A generalization of the Ihara zeta function called the edge zeta function can be defined as follows: For a graph $X$, there is an associated edge matrix $W$ with entries given by the variable $\omega_{ab}$ if the terminal vertex of edge $a$ is the origin vertex of edge $b$ and $b\ne a^{-1}$, and zero otherwise. The edge zeta function is defined by
\begin{align*}
    \zeta_E(W,X)=\prod_{[P]}(1-\tilde N_E(P))^{-1},
\end{align*}
where $\tilde N_E(C)=\omega_{a_1a_2}\omega_{a_2a_3}\cdots\omega_{a_sa_1}$ is the edge norm and $C=a_1\cdots a_s$ is a closed path in $X$.

One can formulate analogs to the prime number theorem and the Riemann hypothesis for these functions, making them interesting in their own right. Here we will define a different but related zeta function $\zeta_\rho(u)$ associated to a density matrix. Indeed, we define
\begin{align}\label{eq:density-zeta-def}
    \zeta_\rho(u)=\prod_{[P]}(1-N_E(P)u^{\nu(P)})^{-1},
\end{align}
where the product is over equivalence classes of primes in the weighted graph associated to $\rho$, $\nu(P)$ again denotes the length of the prime representative $P$, and $N_E(P)$ is the product of the weights of the edges which make up $P$. If the weights of the graph were all unity, then the Ihara zeta function would be recovered. Of course, this cannot be true for a density matrix with rank greater than one since its trace is unity. However, this point is worth noting when \eqref{eq:density-zeta-def} is extended to all matrices, in which case the condition can certainly hold. 

Note that while \eqref{eq:density-zeta-def} is a generalization of the Ihara zeta function, it differs from the natural generalization seen in the book by Terras \cite{terras2010book}. However, by setting $\omega_{ab}=\rho_{ij}u$ where $i$ denotes the origin vertex of edge $a$ and $j$ denotes the origin vertex of edge $b$ (the terminal vertex of $a$), the density matrix zeta function is recovered. Indeed, the edge norm becomes the product of the weights of the closed path in the weighted graph associated to the density matrix multiplied by an extra factor of $u^{\nu(P)}$. Therefore, the density matrix zeta function defined by \eqref{eq:density-zeta-def} is a special case of the edge zeta function.

In the next section, we will show that the separability tests in \cite{bradshaw2022} and \cite{margo} corresponding to the symmetric group are equivalent to the conditions
\begin{align*}
    \frac{1}{n!}\left[\frac{d^n}{du^n}\zeta_\rho(u)\right]_{u=0}=1
\end{align*}
for the zeta function of the graph associated to the reduced density matrix of a pure state.

\section{Equivalence of Test with Zeta Function Criterion}\label{sec:equivalence}
Our main result relies on the fact that the zeta function associated to a density matrix is the generating function for the cycle index polynomial of the symmetric group evaluated at $x_j=\tr[\rho^j]$ for $j=1,\ldots,n$. To see this, we will derive an exponential expression for this function from \eqref{eq:density-zeta-def}. We will also prove the following theorem, which gives a determinate formula for this zeta function. Throughout, we will assume that $\lvert u\rvert$ is small enough to force the convergence of the series involved. The proof is similar to that in \cite{horton2006} for the edge zeta function.
\begin{theorem}\label{thm:det-thm}Let $\rho$ be a density matrix and let $\zeta_\rho$ denote the associated zeta function. Then
    \begin{align}\label{eq:det-formula}
        \zeta_{\rho}(u)=\det(I-u\rho)^{-1}.
    \end{align}
\end{theorem}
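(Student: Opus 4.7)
The plan is to show that $\log\zeta_\rho(u)$ has the same power series expansion as $-\log\det(I-u\rho)$. Using the standard identity $\log\det=\tr\log$ together with $-\log(1-x)=\sum_{m\geq 1}x^m/m$, the right-hand side of \eqref{eq:det-formula} immediately gives
\begin{align*}
-\log\det(I-u\rho)=\sum_{n\geq 1}\frac{\tr(\rho^n)}{n}u^n,
\end{align*}
so it suffices to derive the same expansion for $\log\zeta_\rho(u)$ starting from \eqref{eq:density-zeta-def}.

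Taking the logarithm of \eqref{eq:density-zeta-def} and expanding each Euler-type factor yields
\begin{align*}
\log\zeta_\rho(u)=\sum_{[P]}\sum_{m\geq 1}\frac{N_E(P)^m u^{m\nu(P)}}{m}.
\end{align*}
Every closed backtrackless tailless path is uniquely a power $P^m$ of a primitive prime $P$, and its cyclic equivalence class contains exactly $\nu(P)=n/m$ representatives, where $n=m\nu(P)$ is the total length. Using $N_E(P)^m=N_E(P^m)$ and repackaging by $n$, the factor of $\nu(P)$ from the number of cyclic shifts combines with $1/m$ to yield $1/n$, giving
\begin{align*}
\log\zeta_\rho(u)=\sum_{n\geq 1}\frac{u^n}{n}\sum_{C}N_E(C),
\end{align*}
where the inner sum ranges over ordered closed backtrackless tailless paths $C$ of length $n$ in the weighted graph of $\rho$.

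The crux of the argument is then the observation that in the graph construction of Section~\ref{sec:graph-density} no edge has an inverse: for each nonzero $\rho_{ij}$ the directed edges $(i,j)$ and $(j,i)$ are included separately and are \emph{not} identified as mutual inverses, as the author's parenthetical about ``primes involving an edge from $i$ to $j$ followed by an edge from $j$ to $i$'' makes explicit. Hence the backtrackless and tailless conditions are vacuous in this particular graph, the inner sum extends over all closed walks of length $n$, and
\begin{align*}
\sum_{C}N_E(C)=\sum_{i_1,\ldots,i_n}\rho_{i_1 i_2}\rho_{i_2 i_3}\cdots\rho_{i_n i_1}=\tr(\rho^n).
\end{align*}
Matching the two expansions and exponentiating yields \eqref{eq:det-formula}. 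The main obstacle is the combinatorial bookkeeping in the middle step: one must correctly track that the equivalence class of $P^m$ has $\nu(P)$ (rather than $n$) cyclic representatives so that the weights collapse to the required $1/n$. Once this and the absence of edge inverses are in hand, the theorem reduces to the scalar identity $\log\det=\tr\log$ applied to $I-u\rho$.
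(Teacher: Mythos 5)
Your proof is correct, and it takes a genuinely different route from the paper's. The paper also reduces the theorem to showing $\log\zeta_\rho(u)=\sum_{m\ge1}\tr[\rho^m]u^m/m$, but it clears the awkward $1/(jm)$ denominators by applying the Euler-type operator $L:=\sum_{k,l}\rho_{kl}\,\partial/\partial\rho_{kl}$ to both sides (using $L(N_E(P))^j=j\nu(P)(N_E(P))^j$ and $L\tr(\rho^m)=m\tr(\rho^m)$), and then recovers the identity itself from $L\log\zeta_\rho=L\log\det(I-u\rho)^{-1}$ via a ``method of characteristics'' integration with the initial condition $\rho=0$. You instead handle the denominators by direct combinatorial bookkeeping: each equivalence class of primes of length $\nu(P)$ has exactly $\nu(P)$ cyclic representatives, every closed backtrackless tailless path is uniquely a power of a primitive one, and $\tfrac{1}{m}\cdot\tfrac{1}{\nu(P)}\cdot\nu(P)$ collapses to $1/n$ with $n=m\nu(P)$. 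This is the standard Ihara-zeta-style argument; it buys you a shorter proof with no differential operator and no integration step to justify. You also make explicit a point the paper leaves implicit, namely that in this graph the directed edges $(i,j)$ and $(j,i)$ are distinct edges rather than mutual inverses, so the backtrackless and tailless conditions are vacuous and the closed-path sum really does equal $\tr(\rho^n)$; this is needed in the paper's proof as well when it identifies $(u^m\rho^m)_{ii}$ with a sum over closed backtrackless tailless paths, so spelling it out is a genuine improvement rather than a digression.
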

\begin{proof} Starting from the definition \eqref{eq:density-zeta-def}, we have
\begin{align*}
    \zeta_\rho(u)&=\prod_{[P]}(1-N_E(P)u^{\nu(P)})^{-1}.
\end{align*}
Now taking the logarithm, this becomes
\begin{align*}
    \log(\zeta_\rho(u))&=-\sum_{[P]}\log(1-N_E(P)u^{\nu(P)})\\
    &=\sum_{[P]}\sum_{j=1}^\infty\frac{(N_E(P)u^{\nu(P)})^j}{j}\\
    &=\sum_{j,m=1}^\infty\sum_{\substack{P \\ \nu(P)=m}}\frac{(N_E(P)u^{\nu(P)})^j}{jm},
\end{align*}
where in the second equality we have expanded the logarithm, and in the third, the inner sum is now over all prime paths of length $m$ (the equivalence class of such a prime contains $m$ elements, explaining the factor of $m$ in the denominator). Let us define an operator by $L:=\sum_{k,l}\rho_{kl}\frac{\partial}{\partial\rho_{kl}}$ and observe that $N_E(P)=\rho_{i_1j_1}\cdots\rho_{i_{\nu(P)}j_{\nu(P)}}$ for $i_1,\ldots,i_{\nu(P)},j_1,\ldots,j_{\nu(P)}$ the vertices in $P$. Then $L(N_E(P))^j=j\nu(P)(N_E(P))^j$, so that we have
\begin{align}
    L\log(\zeta_\rho(u))&=\sum_{j,m=1}^\infty\sum_{\substack{P \\ \nu(P)=m}}(N_E(P)u^{\nu(P)})^j\nonumber\\
    &=\sum_{j,m=1}^\infty\sum_{\substack{P \\ \nu(P)=m}}N_E(P^j)u^{j\nu(P)}\label{eq:sum-primes},
\end{align}
where we have used the fact that $(N_E(P))^j=N_E(P^j)$ since $P^j$ is just $j$ copies of each edge in $P$. Now, we are doing nothing more than summing over all the primes of any given length and of any given power. Thus, \eqref{eq:sum-primes} is equivalent to a sum over all closed, backtrackless, tailless paths. That is, we can drop the primitive assumption and write \eqref{eq:sum-primes} as
\begin{align*}
    L\log(\zeta_\rho(u))=\sum_{C}N_E(C)u^{\nu(C)},
\end{align*}
where the sum is over the paths mentioned above. The $ij$-th component of the $m$-th power of $u\rho$ is given by
\begin{align*}
    (u^m\rho^m)_{ij}=u^m\sum_{i_2,\ldots,i_{m}}\rho_{ii_2}\rho_{i_2i_3}\cdots\rho_{i_{m-1}i_m}\rho_{i_mj},
\end{align*}
and now we recognize the summand as the value of $N_E(C)$ for some path $C=e_{ii_2}e_{i_2i_3}\cdots e_{i_{m-1}i_m}e_{i_mj}$. Letting $i=j$, this becomes a closed, backtrackless, tailless path; therefore, we have shown
\begin{align*}
    (u^m\rho^m)_{ii}=u^m\sum_{\substack{C\\ o(C)=i\\\nu(C)=m}}N_E(C),
\end{align*}
where $o(C)$ denotes the origin vertex of $C$. Now summing over $i$, we have
\begin{align*}
    \tr[\rho^m]u^m=u^m\sum_{\substack{C\\\nu(C)=m}}N_E(C),
\end{align*}
so that summing over $m$ produces
\begin{align*}
    L\log(\zeta_\rho(u))=\sum_{m=1}^\infty\tr[\rho^m]u^m.
\end{align*}
On the other hand,
\begin{align*}
    L&\log(\det(I-u\rho)^{-1})\\
    &=L\log(\det(\exp(\log(I-u\rho)^{-1})))\\
    &=L\log(\exp(-\tr(\log(I-u\rho))))\\
    &=-L\tr(\log(I-u\rho))\\
    &=L\tr\left(\sum_{m=1}^\infty\frac{\rho^m}{m}u^m\right)\\
    &=L\sum_{m=1}^\infty\frac{\tr(\rho^m)}{m}u^m\\
    &=\sum_{m=1}^\infty\tr(\rho^m)u^m,
\end{align*}
where in the last line, we have used $L\tr(\rho^m)=m\tr(\rho^m)$. The argument is similar to the one for the relation $L(N_E(P))^j=j\nu(P)N_E(P)$ used earlier. Thus, we have shown that
\begin{align*}
    L\log(\zeta_\rho(u))=L\log(\det(I-u\rho)^{-1}).
\end{align*}
To finish off the proof, we note that with $\rho_{ij}=0$ for all $i,j$, we have $\zeta_\rho(u)=1=\det(I-u\rho)^{-1}$, so that applying the method of characteristics yields
\begin{align*}
    \log\left(\frac{\zeta_\rho(u)}{\det(I-u\rho)^{-1}}\right)=0,
\end{align*}
which implies
\begin{align*}
    \zeta_\rho(u)=\det(I-u\rho)^{-1}.
\end{align*}
\end{proof}

Theorem~\ref{thm:det-thm} tells us that the reciprocals of the nonzero eigenvalues of $\rho$ coincide with the zeros of $1/\zeta_\rho$ and therefore the singularities of $\zeta_\rho$. It thereby establishes a variant of the Hilbert-P\'olya conjecture wherein, instead of the imaginary parts of the nontrivial zeros of a zeta function, we consider the singularities of the zeta function. Then the corresponding statement is that the singularities of this zeta function are given by the reciprocal eigenvalues of a self-adjoint operator, namely the matrix $\rho$. In fact, it is clear from \eqref{eq:det-formula} that $\zeta_\rho$ has no zeros; although, it does vanish asymptotically.

Note that the matrix representation of a quantum state is basis dependent, and under a change of basis, the weighted graph associated to this matrix will change too. However, the zeta function is unchanged, and this fact follows from Theorem~\ref{thm:det-thm}. Indeed, if $P$ is a change of basis matrix and $\rho'=P\rho P^{-1}$, then $\zeta_{\rho'}(u)=\det(I-u\rho')^{-1}=\det(PP^{-1}-uP\rho P^{-1})^{-1}=\det(P)^{-1}\det(I-u\rho)\det(P)=\det(I-u\rho)=\zeta_\rho(u)$. We can therefore rest assured that $\zeta_\rho(u)$ is well-defined.

\begin{corollary}\label{cor:exp-cor} Let $\rho$ be a density matrix and let $\zeta_\rho$ denote the associated zeta function. Then
\begin{align*}
    \zeta_{\rho}(u)=\exp\left(\sum_{m=1}^\infty\frac{\tr[\rho^m]}{m}u^m\right).
\end{align*}
\end{corollary}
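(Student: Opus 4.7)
The plan is to derive the corollary directly from the determinant formula of Theorem~\ref{thm:det-thm} using the standard identity $\det(M) = \exp(\tr\log M)$ together with the power series expansion of the matrix logarithm. In fact, this identity already appears in the middle of the proof of Theorem~\ref{thm:det-thm}, so the corollary is essentially a byproduct; I would present it as an independent derivation for clarity.

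First I would invoke Theorem~\ref{thm:det-thm} to write $\zeta_\rho(u) = \det(I - u\rho)^{-1}$. Next I would apply the identity $\det(M) = \exp(\tr\log M)$, valid for any matrix $M$ sufficiently close to the identity, so that
\[
\zeta_\rho(u) = \exp\bigl(-\tr\log(I - u\rho)\bigr).
\]
Then I would expand the matrix logarithm via its Taylor series $\log(I - X) = -\sum_{m=1}^\infty X^m/m$, which converges in operator norm for $|u|$ small enough (say $|u| < 1/\|\rho\|$), and interchange the trace with the sum by linearity to obtain
\[
-\tr\log(I - u\rho) = \sum_{m=1}^\infty \frac{\tr[\rho^m]}{m}\, u^m.
\]
Substituting this into the exponential yields the claimed formula.

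The only real subtlety is justifying the convergence and the interchange of trace with the infinite series, but on a finite-dimensional Hilbert space this is routine: the eigenvalues of $\rho$ are bounded by $1$, so for $|u|<1$ the series converges absolutely in the operator norm, and the trace is continuous. Since we have already assumed throughout the paper that $|u|$ is small enough for all series in sight to converge, no extra hypothesis is needed. I would close by remarking that this corollary makes manifest the connection promised at the end of Section~\ref{sec:graph-zeta}: expanding the outer exponential as a formal power series in $u$ and collecting terms reproduces exactly the cycle index polynomial of $S_k$ evaluated at $x_j = \tr[\rho^j]$, which in turn equals the acceptance probability $p^{(k)}$ of the separability test reviewed in Section~\ref{sec:sep-tests}.
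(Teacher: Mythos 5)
Your proposal is correct and follows essentially the same route as the paper's own proof: invoke Theorem~\ref{thm:det-thm}, apply the identity $\det(M)=\exp(\tr\log M)$, expand the matrix logarithm as a power series, and pass the trace through the sum. The extra remarks on convergence and on the link to the cycle index polynomial are fine but not needed beyond what the paper already assumes.
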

\begin{proof}
    Starting from Theorem~\ref{thm:det-thm}, we have
\begin{align*}
    \zeta_\rho(u)&=\det(I-u\rho)^{-1}\\
    &=\det(\exp(\log(I-u\rho)^{-1}))\\
    &=\exp(-\tr(\log(I-u\rho)))\\
    &=\exp\left(\tr\left(\sum_{m=1}^\infty\frac{\rho^m}{m}u^m\right)\right)\\
    &=\exp\left(\sum_{m=1}^\infty\frac{\tr[\rho^m]}{m}u^m\right)
\end{align*}
\end{proof}

The next theorem shows that $\zeta_\rho(u)$ is the generating function for the cycle index polynomial $Z(S_n)(1,\tr[\rho^2],\ldots,\tr[\rho^n])$. This is the key to the relationship between this zeta function and the separability tests discussed in Section~\ref{sec:sep-tests}.
\begin{theorem}\label{thm:gen-thm}
    Let $\rho$ be a density matrix and let $\zeta_\rho$ denote the associated zeta function. Then $\zeta_\rho$ is the generating function for $Z(S_n)(1,\tr[\rho^2],\ldots,\tr[\rho^n])$.
\end{theorem}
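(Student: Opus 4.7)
The plan is to deduce the result directly from Corollary~\ref{cor:exp-cor}, which hands us the exponential form
\begin{align*}
\zeta_\rho(u)=\exp\left(\sum_{m=1}^\infty \frac{\tr[\rho^m]}{m}u^m\right),
\end{align*}
and then invoke the classical fact that the exponential of $\sum_j x_j u^j/j$ is the ordinary generating function of the cycle index polynomials $Z(S_n)(x_1,\ldots,x_n)$. In our setting $x_j=\tr[\rho^j]$ for $j\ge 1$, with $x_1=\tr\rho=1$, which matches the evaluation $Z(S_n)(1,\tr[\rho^2],\ldots,\tr[\rho^n])$ stated in the theorem.

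Concretely, I would proceed as follows. First, use the absolute convergence guaranteed by the smallness assumption on $\lvert u\rvert$ to write
\begin{align*}
\exp\!\left(\sum_{m=1}^\infty \frac{\tr[\rho^m]}{m}u^m\right)=\prod_{m=1}^\infty \exp\!\left(\frac{\tr[\rho^m]}{m}u^m\right)=\prod_{m=1}^\infty \sum_{a_m=0}^\infty \frac{(\tr[\rho^m])^{a_m}}{m^{a_m}a_m!}\,u^{m a_m}.
\end{align*}
Next, I would multiply these series out, collecting terms of total degree $n$ in $u$. The coefficient of $u^n$ is obtained precisely by choosing nonnegative integers $a_1,a_2,\ldots,a_n$ with $a_1+2a_2+\cdots+na_n=n$, so
\begin{align*}
[u^n]\,\zeta_\rho(u)=\sum_{a_1+2a_2+\cdots+na_n=n}\prod_{j=1}^n \frac{(\tr[\rho^j])^{a_j}}{j^{a_j}a_j!}.
\end{align*}
Finally, I would recognize the right-hand side as exactly the partition-sum formula for $Z(S_n)(x_1,\ldots,x_n)$ with $x_j=\tr[\rho^j]$, recorded in Section~\ref{sec:sep-tests}; since $\tr\rho=1$, this is $Z(S_n)(1,\tr[\rho^2],\ldots,\tr[\rho^n])$. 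Equating coefficients of $u^n$ on both sides yields
\begin{align*}
\zeta_\rho(u)=\sum_{n=0}^\infty Z(S_n)(1,\tr[\rho^2],\ldots,\tr[\rho^n])\,u^n,
\end{align*}
which is the desired generating-function identity.

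The argument is essentially a bookkeeping exercise in double power series, so there is no serious obstacle. The only mild subtlety is to justify interchanging the infinite product with the infinite sums and rearranging the resulting double series, which is legitimate provided $\lvert u\rvert$ is small enough that every series in sight converges absolutely. This is exactly the convergence assumption already invoked for Theorem~\ref{thm:det-thm} and Corollary~\ref{cor:exp-cor}, so no additional analytic work is needed. As a sanity check, one recovers $p^{(k)}$ of Section~\ref{sec:sep-tests} as the coefficient of $u^k$ in $\zeta_{\rho_B}(u)$, which is the conceptual payoff that motivates the statement.
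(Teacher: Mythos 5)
Your proposal is correct and follows essentially the same route as the paper's own proof: start from Corollary~\ref{cor:exp-cor}, split the exponential into an infinite product, expand each factor, and identify the coefficient of $u^n$ with the partition-sum formula for $Z(S_n)(1,\tr[\rho^2],\ldots,\tr[\rho^n])$. The added remarks on absolute convergence and the sanity check against $p^{(k)}$ are fine but not needed beyond what the paper already assumes.
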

\begin{proof}
By Corollary~\ref{cor:exp-cor}, we have the identity
\begin{align*}
    \zeta_{\rho}(u)=\exp\left(\sum_{m=1}^\infty\frac{\tr[\rho^m]}{m}u^m\right).
\end{align*}
Let us split up this exponential into a product and then expand. This gives us
\begin{align*}
    \zeta_{\rho}(u)&=\prod_{m=1}^\infty\exp\left(\frac{\tr[\rho^m]}{m}u^m\right)\\
    &=\prod_{m=1}^\infty\sum_{j_m=0}^\infty\frac{\tr[\rho^m]^{j_m}}{m^{j_m}j_m!}u^{mj_m}
\end{align*}
Then the $n$-th coefficient is given by the sum over all terms where the exponent $\sum_{k=1}^{\infty}kj_{k}$ of $u$ is equal to $n$, each of which is given by a partition of $n$. That is, we have that the $n$-th coefficient is
\begin{align*}
\sum_{j_{1}+\cdots+nj_{n}=n}\prod_{k=1}^{n}\frac{(\tr[\rho^k])^{j_k}}{k^{j_{k}}j_{k}!},
\end{align*}
which is exactly the cycle index polynomial $Z(S_n)(1,\tr[\rho^2],\ldots,\tr[\rho^n])$.
\end{proof}
\begin{corollary}\label{cor:deriv}
    Let $\psi_{AB}$ be a bipartite pure state and let $\rho_B:=\tr_A[\psi_{AB}]$ be the reduced density matrix. Then $\psi_{AB}$ is separable if and only if $\frac{1}{n!}\left[\frac{d^n}{du^n}\zeta_\rho(u)\right]_{u=0}=1$ for all $n\in\mathbb{Z}_{\ge0}$.
\end{corollary}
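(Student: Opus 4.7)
The plan is to chain together Theorem~\ref{thm:gen-thm} with the separability criterion recalled in Section~\ref{sec:sep-tests}; essentially no new computation is required beyond careful bookkeeping. First, I would observe that by Taylor's theorem, $\frac{1}{n!}\left[\frac{d^n}{du^n}\zeta_\rho(u)\right]_{u=0}$ is, by definition, the $n$-th coefficient in the Maclaurin expansion of $\zeta_\rho$. Theorem~\ref{thm:gen-thm} identifies this coefficient with the cycle index polynomial $Z(S_n)(1,\tr[\rho_B^2],\ldots,\tr[\rho_B^n])$.

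Next, I would recall from Section~\ref{sec:sep-tests} that the acceptance probability of the $S_n$-Bose symmetry test on $n$ copies of $\psi_{AB}$ is
\begin{align*}
p^{(n)}=\sum_{a_1+2a_2+\cdots+na_n=n}\prod_{j=1}^n\frac{(\tr[\rho_B^j])^{a_j}}{j^{a_j}a_j!},
\end{align*}
which is exactly the same cycle index polynomial evaluated at the moments of $\rho_B$. Therefore
\begin{align*}
\frac{1}{n!}\left[\frac{d^n}{du^n}\zeta_\rho(u)\right]_{u=0}=p^{(n)}.
\end{align*}

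Finally, I would invoke the stated correctness of the bipartite separability test: $\psi_{AB}$ is separable if and only if $p^{(n)}=1$ for every positive integer $n$. The $n=0$ case is trivial since the determinant formula of Theorem~\ref{thm:det-thm} gives $\zeta_\rho(0)=\det(I)^{-1}=1$ automatically. Combining these observations delivers both directions of the corollary.

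The only subtlety — which I would not really call an obstacle — is verifying that the indexing conventions line up on both sides: one must check that the $n$-th Taylor coefficient, the cycle index polynomial $Z(S_n)$, and the acceptance probability $p^{(n)}$ are all indexed by the same integer, and that the $n=0$ case is handled separately so the quantifier $n\in\mathbb{Z}_{\ge0}$ in the statement is accounted for. Once these are aligned, the corollary is immediate from Theorem~\ref{thm:gen-thm} and the results of \cite{bradshaw2022,margo}.
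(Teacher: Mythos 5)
Your proposal follows essentially the same route as the paper's proof: it identifies the $n$-th Maclaurin coefficient of $\zeta_\rho$ with the cycle index polynomial $Z(S_n)(1,\tr[\rho_B^2],\ldots,\tr[\rho_B^n])$ via Theorem~\ref{thm:gen-thm}, equates this with the acceptance probability $p^{(n)}$ of the $S_n$-Bose symmetry test, and invokes the correctness of that test from Section~\ref{sec:sep-tests} and \cite{bradshaw2022,margo}. The argument is correct, and your explicit handling of the $n=0$ case is a small but welcome addition beyond what the paper writes out.
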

\begin{proof}
    In Section~\ref{sec:sep-tests} (and in \cite{bradshaw2022,margo}) it was shown that a bipartite pure state $\psi_{AB}$ is separable if and only if the acceptance probability of the symmetric group separability algorithm reviewed there is 1 for all $n$, and that this is equivalent to the statement that the state is separable if and only if $Z(S_n)(1,\tr[\rho^2],\ldots,\tr[\rho^n])=1$ for all $n$. By Theorem~\ref{thm:gen-thm}, the corollary then follows.
\end{proof}

By considering the density matrix zeta function in the original form \eqref{eq:density-zeta-def}, Corollary~\ref{cor:deriv} exchanges the computation and evaluation of the cycle index polynomial of $S_n$ for a graph-theorectic calculation. It follows that the tests in \cite{bradshaw2022,margo} can be viewed from a graph-theorectic perspective just as the PPT criterion was shown to be equivalent to the graph-theoretic degree criterion \cite{braunstein2006}. To illustrate this point, let us compute a few of the coefficients. For $n=0$, the condition for separability holds trivially since 
\begin{align*}
    \zeta_\rho(0)=\prod_{\substack{[P]\\\nu(P)=0}}(1-N_E(P))^{-1},
\end{align*}
but there are no primes with zero edges, so that the product evaluates to unity. For $n=1$, we use the fact that $\frac{d}{du}\zeta_\rho(u)=\zeta_\rho(u)\frac{d}{du}\log(\zeta_\rho(u))$ and note that
\begin{align*}
    \frac{d}{du}\log(\zeta_\rho(u))&=-\frac{d}{du}\sum_{[P]}\log(1-N_E(P)u^{\nu(P)})\\
    &=\sum_{[P]}\frac{N_E(P)\nu(P)u^{\nu(P)-1}}{1-N_E(P)u^{\nu(P)}}.
\end{align*}
Now letting $u=0$, the only terms that survive are those with $\nu(P)=1$, and we have
\begin{align*}
    \left[\frac{d}{du}\zeta_\rho(u)\right]_{u=0}&=\left[\zeta_\rho(u)\frac{d}{du}\log(\zeta_\rho(u))\right]_{u=0}\\
    &=\sum_{\substack{[P]\\\nu(P)=1}}N_E(P).
\end{align*}
The only primes with $\nu(P)=1$ are the loops, which correspond to the diagonal entries of the density matrix. Therefore, $[\frac{d}{du}\zeta_\rho(u)]_{u=0}$ is the sum of the diagonal entries of $\rho$; that is, $[\frac{d}{du}\zeta_\rho(u)]_{u=0}=\tr[\rho]=1$, which is consistent with the cycle index polynomial calculation.

For the $n=2$ case, the coefficient is given by evaluating $\frac12\frac{d^2}{du^2}\zeta_\rho(u)$ at $u=0$. Observe that $\frac12\frac{d^2}{du^2}\zeta_\rho(u)$ is given by
\begin{align*}
    \frac12\left(\zeta_\rho(u)\left(\frac{d}{du}\log(\zeta_\rho(u))\right)^2+\zeta_\rho(u)\frac{d^2}{du^2}\log(\zeta_\rho(u))\right)
\end{align*}
so that evaluating at $u=0$ yields
\begin{align*}
    \frac12\left(1+\left[\frac{d^2}{du^2}\log(\zeta_\rho(u))\right]_{u=0}\right),
\end{align*}
where we have used the previous results for $n=0,1$ to simplify. We obtain the graph-theorectic version of the second term from \eqref{eq:density-zeta-def} in a similar way to the $n=1$ case. Indeed, using the standard quotient rule and evaluating at $u=0$ yields
\begin{align*}
    \bigg[\frac{d^2}{du^2}&\log(\zeta_\rho(u))\bigg]_{u=0}\\
    &=2\sum_{\substack{[P]\\\nu(P)=2}}N_E(P)+\sum_{\substack{[P]\\\nu(P)=1}}(N_E(P))^2,
\end{align*}
and the $n=2$ coefficient is therefore given by
\begin{align*}
    \frac{1}{2}\left(1+2\sum_{\substack{[P]\\\nu(P)=2}}N_E(P)+\sum_{\substack{[P]\\\nu(P)=1}}(N_E(P))^2\right).
\end{align*}
Now, the corresponding cycle index polynomial calculation produces the value $\frac12(1+\tr[\rho^2])$. Therefore, it must be the case that
\begin{align*}
    \tr[\rho^2]=2\sum_{\substack{[P]\\\nu(P)=2}}N_E(P)+\sum_{\substack{[P]\\\nu(P)=1}}(N_E(P))^2,
\end{align*}
and this is easy to see when $\rho$ is written in the diagonal basis (which exists by the spectral theorem). Indeed, in this case the graph consists only of loops so that the $\nu(P)=2$ term vanishes and the $\nu(P)=1$ term is the sum of the squares of the eigenvalues of $\rho$.

Consider the density matrix
\begin{align*}
    \rho=\frac12\begin{pmatrix}
        1&1\\
        1&1
    \end{pmatrix},
\end{align*}
\begin{figure}[H]
\centering
\includegraphics[
width=3.25in
]{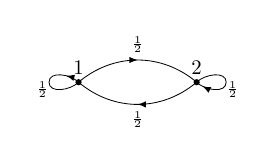}
\caption{The weighted graph associated with the pure state $\rho=\ket{+}\!\!\bra{+}$.}
\label{fig:++graph}%
\end{figure}

\noindent which is the pure state $\ket{+}\!\!\bra{+}$, so that all coefficients should be equal to unity. The weighted graph associated to $\rho$ is given in Fig~\ref{fig:++graph}. As expected, the $n=1$ coefficient is 
\begin{align*}
    \sum_{\substack{[P]\\\nu(P)=1}}N_E(P)=\frac12+\frac12=1
\end{align*}
and the $n=2$ coefficient is
\begin{align*}
    \frac{1}{2}&\left(1+2\sum_{\substack{[P]\\\nu(P)=2}}N_E(P)+\sum_{\substack{[P]\\\nu(P)=1}}(N_E(P))^2\right)\\
    &=\frac12\left(1+2\left(\frac12\cdot\frac12\right)+\left(\frac12\right)^2+\left(\frac12\right)^2\right)=1.
\end{align*}

Next consider the maximally mixed state $\rho=\frac12I$ on one qubit. Since this state is mixed, every purification of it is entangled. The Bell state $\tfrac{1}{\sqrt{2}}(\ket{00}+\ket{11})$ is such a purification since tracing out either subsystem gives the state $\frac12(\ket{0}\!\!\bra{0}+\ket{1}\!\!\bra{1})=\frac12I$. Therefore, by computing the $n=2$ coefficient of $\zeta_\rho(u)$, we get a graph-theorectic proof that the Bell state is entangled. Indeed, the weighted graph associated to $\rho$ is given in Fig~\ref{fig:maxmixgraph} and consists only of loops since the density matrix is diagonal. There are therefore no primes with $\nu(P)=2$ and the only two primes with $\nu(P)=1$ are the loops themselves. Thus, the $n=2$ coefficient is $\frac12(1+0+(\tfrac14+\tfrac14))=\frac34<1$, from which it follows that $\psi=\tfrac{1}{\sqrt{2}}(\ket{00}+\ket{11})$ is entangled. Note that the $n=2$ case is the graph-theorectic equivalent to the well-known SWAP test \cite{barenco1997stabilization,swap}.

\begin{figure}[H]
\centering
\includegraphics[
width=3.25in
]{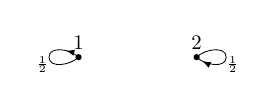}
\caption{The weighted graph associated with the maximally mixed state $\rho=\frac12I$.}
\label{fig:maxmixgraph}%
\end{figure}

It is perhaps worth noting that these tests for each $n$ double as tests for the purity of a density matrix. The $n=2$ case measures the purity exactly. In fact, this is really the mechanism behind the separability test since a pure bipartite state is separable if and only if its reduced state is pure.

\section{Singularity-Eigenvalue Correspondence}\label{sec:sing-eigen} Observe that $\det(I-u\rho)$ is a polynomial in $u$ of degree at most equal to the dimension of the Hilbert space upon which $\rho$ acts. Then the number of zeros of the reciprocal zeta function $1/\zeta_\rho$ is given by the fundamental theorem of algebra, and this coincides with the number of singularities of $\zeta_\rho$. Suppose that $\psi_{AB}$ is a pure bipartite state and let $\rho_B$ denote its reduced density matrix given by tracing out the $A$-subsystem. If $\rho_B$ is pure, it is a rank one operator such that the only nonzero eigenvalue is $\lambda=1$. Since the reciprocals of the nonzero eigenvalues of $\rho_B$ coincide with the zeros of $1/\zeta_\rho$ by Theorem~\ref{thm:det-thm}, we have another criterion for the separability of pure bipartite states.
\begin{corollary}\label{cor:singularity}
    Let $\psi_{AB}$ be a pure bipartite state and let $\rho_B$ denote its reduced density matrix by tracing out one of the systems. Then $\psi_{AB}$ is separable if and only if the only singularity of $\zeta_\rho(u)$ is at $u=1$.
\end{corollary}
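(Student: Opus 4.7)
The plan is to chain together Theorem~\ref{thm:det-thm} with the standard Schmidt-decomposition characterization of pure-state separability. Since the determinant formula $\zeta_\rho(u)^{-1} = \det(I - u\rho_B)$ gives a polynomial of degree at most $\dim\mathcal{H}_B$ in $u$, I would first note that its roots are precisely the reciprocals $1/\lambda_i$ of the nonzero eigenvalues of $\rho_B$. Consequently, the singularities of $\zeta_\rho(u)$ are in one-to-one correspondence with the nonzero eigenvalues of $\rho_B$, a fact already highlighted in the paragraph immediately following Theorem~\ref{thm:det-thm}.

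Next, I would invoke the standard fact that a pure bipartite state $\psi_{AB}$ is separable if and only if its reduced density matrix $\rho_B$ has rank one (Schmidt rank one). Because $\rho_B$ is positive semi-definite with $\tr[\rho_B] = 1$, the rank-one condition is equivalent to $\rho_B$ having exactly one nonzero eigenvalue, which must equal $1$.

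Assembling these two observations gives the biconditional: $\psi_{AB}$ is separable $\Longleftrightarrow$ $\rho_B$ has exactly one nonzero eigenvalue, namely $\lambda = 1$, $\Longleftrightarrow$ $\det(I - u\rho_B)$ has its unique root at $u = 1$, $\Longleftrightarrow$ the only singularity of $\zeta_\rho(u)$ is at $u = 1$. For the forward direction one simply reads off the eigenvalue from the Schmidt decomposition; for the reverse direction, the hypothesis forces the characteristic polynomial of $\rho_B$ to have the form $u^{n-1}(u - 1)$ up to sign, so $\rho_B$ is a rank-one projector and $\psi_{AB}$ must be a product state.

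There is essentially no hard step here: the content is entirely front-loaded into Theorem~\ref{thm:det-thm} and the elementary characterization of pure-state separability. The only subtlety worth flagging in the write-up is that ``only singularity at $u = 1$'' should be read as meaning exactly one singularity, counted without multiplicity but with the requirement that $u = 1$ actually be a singularity (equivalently, $\lambda = 1$ is an eigenvalue), so as to rule out the degenerate situation $\rho_B = 0$ which cannot arise for a genuine density matrix anyway.
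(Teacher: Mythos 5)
Your proposal is correct and follows essentially the same route as the paper: apply Theorem~\ref{thm:det-thm} to identify the singularities of $\zeta_\rho$ with the reciprocals of the nonzero eigenvalues of $\rho_B$, then use the fact that $\psi_{AB}$ is separable if and only if $\rho_B$ is pure (rank one with sole nonzero eigenvalue $1$). Your added remarks about the Schmidt decomposition and the multiplicity of the root at $u=1$ are harmless elaborations of the same argument.
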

\begin{proof}
    By Theorem~\ref{thm:det-thm}, we have
    \begin{align*}
        \frac{1}{\zeta_{\rho_B}(u)}=\det(I-u\rho_B).
    \end{align*}
    Notice that if $u=0$, then $\zeta_\rho\ne0$. Then the condition that $1/\zeta_\rho(u)=0$ is equivalent to the eigenvalue problem
    \begin{align*}
        \rho_B\ket{\phi}=\frac{1}{u}\ket{\phi}.
    \end{align*}
    Now, $\psi_{AB}$ is separable if and only if $\rho_B$ is a pure state, which is true if and only if the only nonzero eigenvalue is $\frac{1}{u}=1$, so that we have $u=1$. This is equivalent to the only zero of the reciprocal zeta function being at $u=1$, which is equivalent to the only singularity of $\zeta_{\rho_B}$ being at $u=1$.
\end{proof}

\begin{figure}[H]
\centering
\includegraphics[
width=3.25in
]{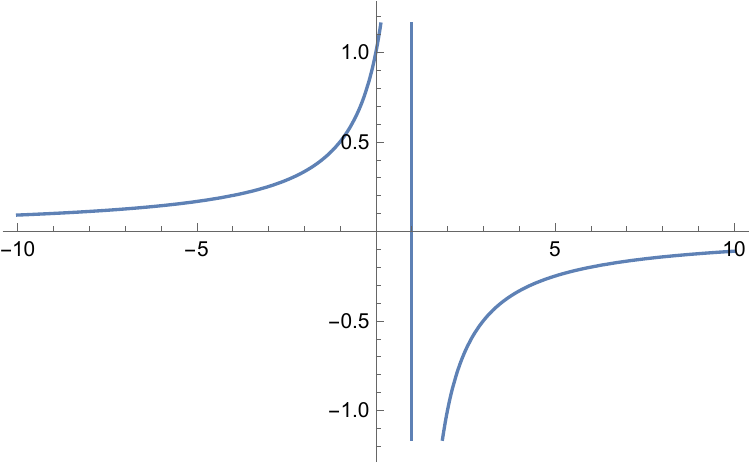}
\caption{Plot of $\zeta_\rho$ with $\rho=\ket{+}\!\!\bra{+}$.}
\label{fig:++state}%
\end{figure}

To illustrate this point, we now examine the plots of the zeta function associated to different choices of density matrices. In Fig~\ref{fig:++state}, we have an archetypal example of a pure state, $\ket{+} = \frac{1}{\sqrt{2}}(\ket{0}+\ket{1})$, and the singularity at $u=1$ is marked by a vertical line. Let us consider something a little more interesting. Since the $W$-state given by $\ket{W}=\frac{1}{\sqrt{3}}(\ket{001}+\ket{010}+\ket{100})$ is a pure state, its zeta function should look the same as that in Fig~\ref{fig:++state}. However, this state is entangled, so that we expect to pick up at least one singularity which is not at $u=1$ in the zeta function after the first system is traced out. This is exactly what we see in Fig~\ref{fig:wstate}.

\begin{figure}[H]
\centering
\includegraphics[
width=3.25in
]{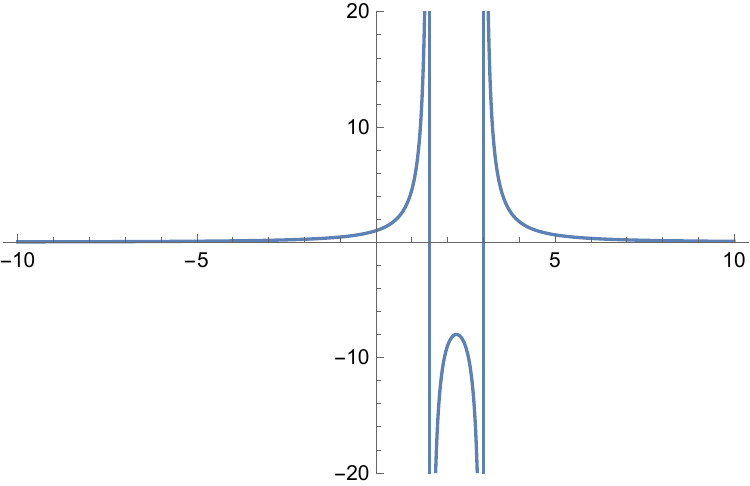}
\caption{Plot of $\zeta_\rho$ with $\rho=\tr_1[\ket{W}\!\!\bra{W}]$.}
\label{fig:wstate}%
\end{figure}

If we instead take $\psi_{AB}$ to be the GHZ-state given by $\ket{GHZ}=\frac{1}{\sqrt{2}}(\ket{000}+\ket{111})$, we find a similar result to $\ket{W}$, except that the zeta function looks somewhat different. Here it turns out that both zeros of the reciprocal zeta function of the reduced state $\rho_B$ are located at $u=2$, so that this is the only singularity that appears in Fig~\ref{fig:ghzstate}.

\begin{figure}[H]
\centering
\includegraphics[
width=3.25in
]{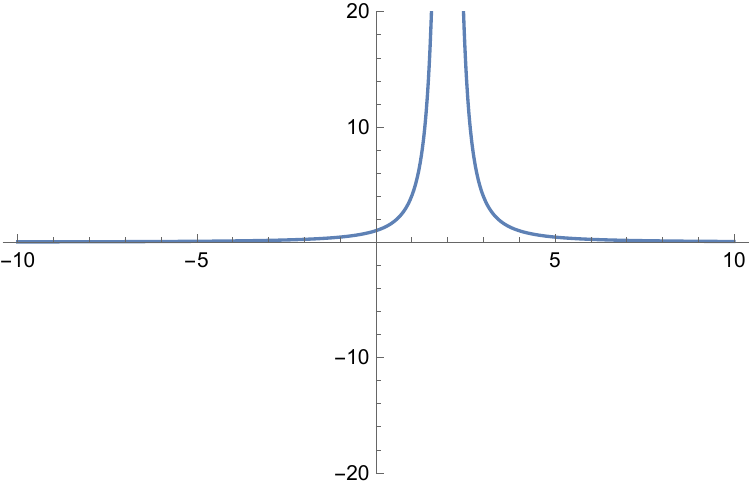}
\caption{Plot of $\zeta_\rho$ with $\rho=\tr_1[\ket{GHZ}\!\!\bra{GHZ}]$.}
\label{fig:ghzstate}%
\end{figure}

Another interesting choice for $\rho_B$ is given by the isotropic state that continuously transforms between the maximally entangled and maximally mixed states. Indeed, we define
\begin{align}\label{eq:isotropic}
    \rho(p)=\begin{pmatrix}\frac{2-p}{4}&0&0&\frac{1-p}{2}\\
    0&\frac{p}{4}&0&0\\
    0&0&\frac{p}{4}&0\\
    \frac{1-p}{2}&0&0&\frac{2-p}{4}
    \end{pmatrix},
\end{align}
from which we recover the maximally entangled state on two qubits at $p=0$ and the maximally mixed state at $p=1$. A three dimensional plot of $\zeta_\rho(u,p)$ is given in Fig~\ref{fig:3disotropic}. The maximally entangled state is a pure state, so that the cross section at $p=0$ looks like Fig~\ref{fig:++state}. The zeta function of the maximally mixed state, being the cross section at $p=1$, is shown in Fig~\ref{fig:maxmix}.

\begin{figure}[H]
\centering
\includegraphics[
width=3.25in
]{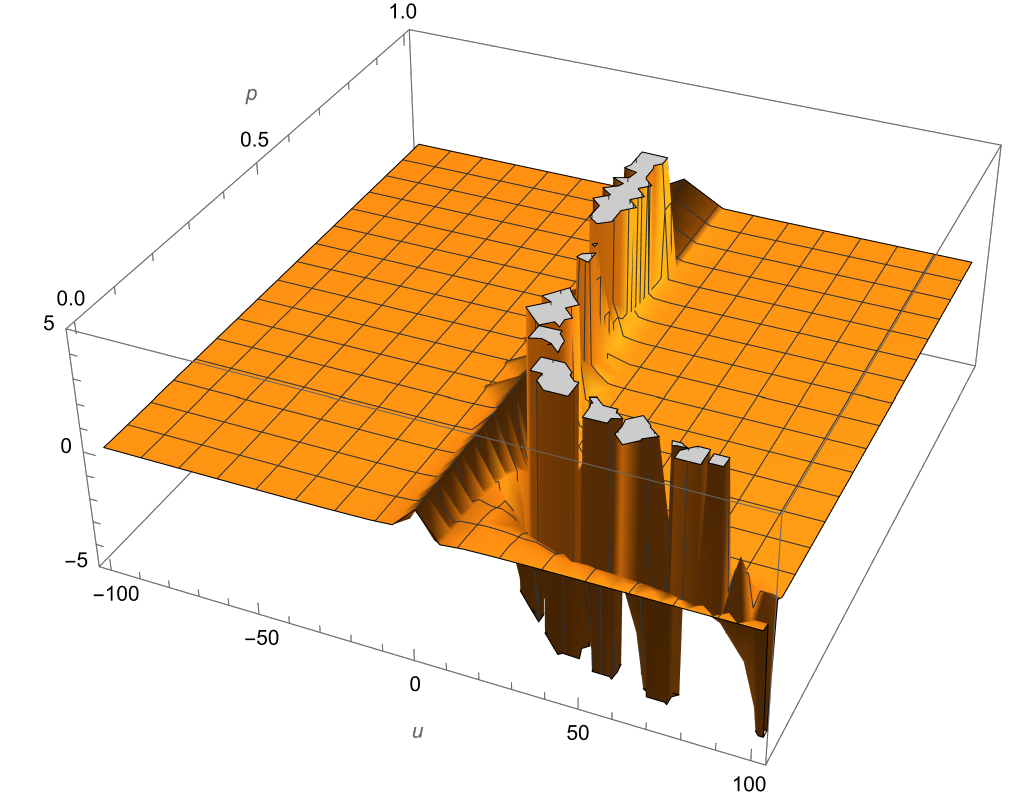}
\caption{Plot of $\zeta_{\rho}(u,p)$ with $\rho(p)$ the isotropic state defined in \eqref{eq:isotropic}.}
\label{fig:3disotropic}%
\end{figure}

\begin{figure}[H]
\centering
\includegraphics[
width=3.25in
]{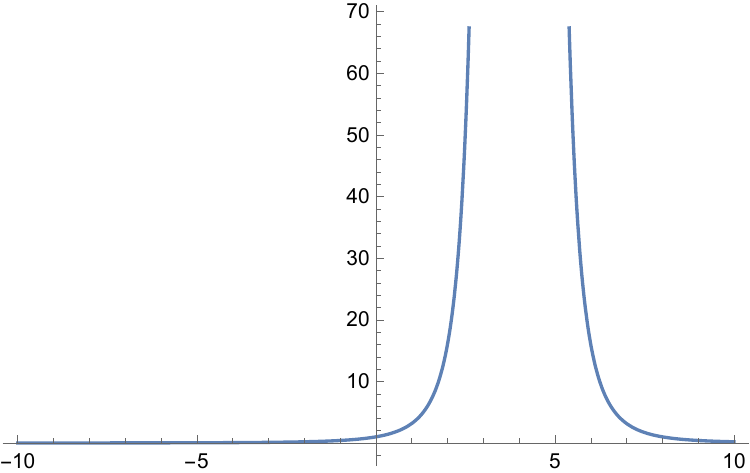}
\caption{Plot of $\zeta_\rho$ with $\rho$ the maximally mixed state on two qubits.}
\label{fig:maxmix}%
\end{figure}

We see from Fig~\ref{fig:3disotropic} that as $p$ varies from zero to one, a second pole comes from infinity and recombines with the original $u=1$ pole when $p=1$, where the state is maximally mixed; however, the merging of the poles does not happen at $u=1$, as the first pole shifts to the right as $p$ increases. Instead, the merging happens at the dimension of the Hilbert space (in this case 4), and this can be seen from \eqref{eq:isotropic} with $p=1$, where the only non-zero eigenvalue is $\frac14$.

It should be noted that neither Corollary~\ref{cor:deriv} nor Corollary~\ref{cor:singularity} apply to mixed states in general. To see this, define
\begin{align*}
    \sigma:=\begin{pmatrix}
        \frac13&0&0&0\\
        0&0&0&0\\
        0&0&0&0\\
        0&0&0&\frac23
    \end{pmatrix},
\end{align*}
and set $\rho_B:=\sigma\otimes\sigma\otimes\sigma$. Then there are singularities at $u=\frac{27}{8},\frac{27}{4},\frac{27}{2},$ and $27$, as can be seen in Fig~\ref{fig:kronecker}. These singularities correspond to the three-fold products of the nonzero reciprocal eigenvalues of $\sigma$, as expected. Moreover, the coefficients in the expansion of the zeta function go to zero as seen in Fig~\ref{fig:coefficientgraph}.

\begin{figure}[H]
\centering
\includegraphics[
width=3.25in
]{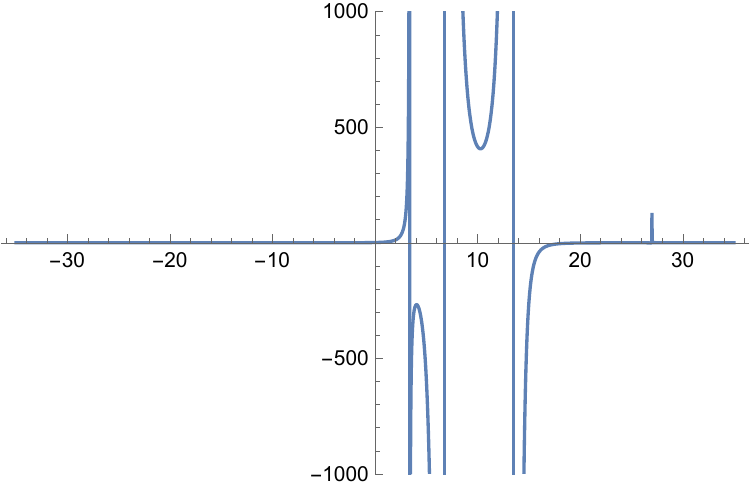}
\caption{Plot of $\zeta_\rho$ with $\rho=\sigma\otimes\sigma\otimes\sigma$.}
\label{fig:kronecker}%
\end{figure}

\begin{figure}[H]
\centering
\includegraphics[
width=3.25in
]{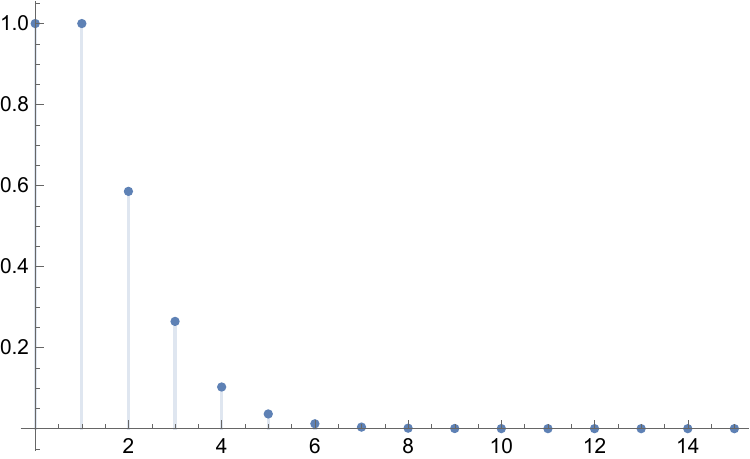}
\caption{Coefficients in expansion of $\zeta_\rho$ with $\rho=\sigma\otimes\sigma\otimes\sigma$.}
\label{fig:coefficientgraph}%
\end{figure}

By tracing out the two extra copies of $\sigma$, we recover the original state $\sigma$, which is mixed. The corresponding plots for $\zeta_\sigma$ and its coefficients are shown in Fig~\ref{fig:nonkronecker} and Fig~\ref{fig:nonkroneckerplot}, respectively. This shows that the tests we have developed for pure states do not hold for mixed states. Therefore, another method will have to be used to check for entanglement in this case.

\begin{figure}[H]
\centering
\includegraphics[
width=3.25in
]{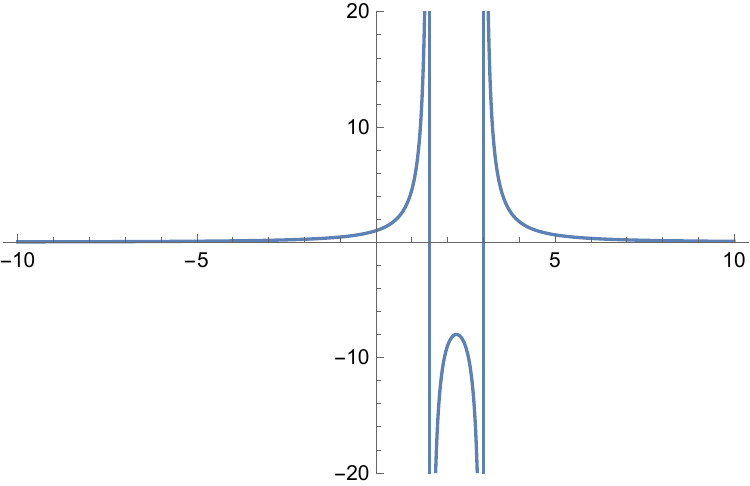}
\caption{Plot of $\zeta_\sigma$ with $\sigma=\frac13\ket{00}\!\!\bra{00}+\frac23\ket{11}\!\!\bra{11}$.}
\label{fig:nonkronecker}%
\end{figure}

\begin{figure}[H]
\centering
\includegraphics[
width=3.25in
]{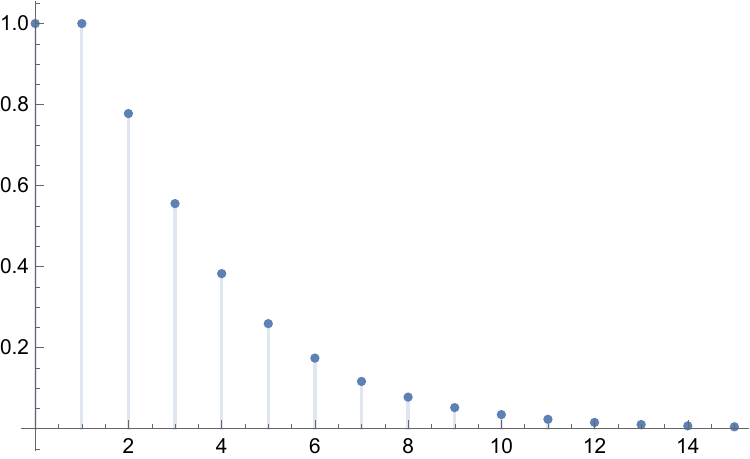}
\caption{Coefficients in expansion of $\zeta_\sigma$.}
\label{fig:nonkroneckerplot}%
\end{figure}

\section{Conclusion}\label{sec:conclusion}
A connection between the zeta function $\zeta_\rho$ defined in \eqref{eq:density-zeta-def} and the separability of bipartite pure states was derived. However, this result does not carry over to the more general case where the joint state is mixed. Establishing a similar result in the mixed state setting is worth an investigation of its own. On paper, to test whether a pure bipartite state is entangled, it suffices to compute the $n=2$ case since the reduced state has unit purity if and only if the joint state is separable. From the graph-theoretic perspective, this means that the only relevant primes when testing for entanglement are those with length $\nu(P)=1$ or $2$. The higher order tests in \cite{bradshaw2022,margo} are introduced because the acceptance probability decays as $n\to\infty$, so when noise is introduced, it may be beneficial to perform a higher order test.

Through the determinant formula \eqref{eq:det-formula}, it was shown that the nonzero eigenvalues of a density matrix $\rho$ are in correspondence with the singularities of $\zeta_\rho$. Such a connection has the potential to transform the study of the spectra of random density matrices to a graph-theorectic setting, and we leave this to future work. The reader is directed to \cite{terras2010book} for more about the connections between random matrix theory and the various graph zeta functions.

\section*{Data Availability Statement} \sloppy The Mathematica code used in this work is available in the following GitHub repository: \\ \url{https://github.com/mlabo15/ZetaFunctions}.
\section*{Acknowledgements} ZPB and MLL acknowledge support from the Department of Defense SMART scholarship.
\end{multicols}

\bibliographystyle{plainurl}
\bibliography{Ref}
\end{document}